\newcommand{\dx}{\dot{x}}
\newcommand{\tx}{\tilde{x}}
\newcommand{\dtx}{\dot{\tilde x}}
\newcommand{\ty}{\tilde{y}}
\newcommand{\nullsp}{\mathcal{N}}
\newcommand{\range}{\mathcal{R}}
\newcommand{\dxa}{\dot{\mathbf{x}}}
\newcommand{\xa}{\mathbf{x}}
\newcommand{\ya}{\mathbf{y}}
\newcommand{\R}{\mathbb{R}}
\theoremstyle{plain}
\newtheorem{theorem}{Theorem}
\newtheorem{remark}{Remark}
\newtheorem{lemma}{Lemma}
\newtheorem{corollary}{Corollary}
\theoremstyle{definition}
\newtheorem{definition}{Definition}
\title{\LARGE \bf
Identifying Network Structure of Nonlinear Dynamical Systems: Contraction and Kuramoto Oscillators
}
\author{Jaidev Gill and Jing Shuang (Lisa) Li
\thanks{J.G. and J.S.L. are with the Department of Electrical Engineering and Computer Science, University of Michigan, Ann Arbor, MI, 48109, USA.
        {\tt\small \{jaidevg, jslisali\}@umich.edu}.    }%
}
\begin{document}

\maketitle
\thispagestyle{empty}
\pagestyle{empty}

\begin{abstract}
In this work, we study the identifiability of network structures (i.e., topologies) for networked nonlinear systems when partial measurements of the nodal dynamics are taken. 
We explore scenarios where different candidate structures can yield similar measurements, thus limiting identifiability. 
To do so, we apply the contraction theory framework to facilitate comparisons between different networks.
We show that semicontraction in the observable space is a sufficient condition for two systems to become indistinguishable from one another based on partial measurements. 
We apply this framework to study networks of Kuramoto oscillators, and discuss scenarios in which different network structures (both connected and disconnected) become indistinguishable.
\end{abstract}

\section{Introduction and Motivation}\label{sec:Intro}
Many aspects of the natural world (e.g., brains, social networks) can be described as networks of interconnected dynamical systems \cite{bullmore2009complex,MARDER_2011}. 
Thus, there is significant interest in the identification of network structures (i.e., topology) from time series measurements of nodal dynamics \cite{STEPANIANTS_2020, GREWAL_1976, CHEN_2009}. 
In the context of neuroscience, estimating the structure of neuronal circuits and brain networks from partial measurements is a longstanding problem \cite{LI_2025, MARDER_2011}. 
Moreover, given a set of measurements, different methods may identify different networks \cite{STEPANIANTS_2020, GILL_2025}.
In networked nonlinear systems, behavior such as  synchrony \cite{DORFLER_2014, JADBABAIE_2004} can cause measurements from several different systems to become identical, presenting challenges for identification \cite{CHEN_2009} (see Fig.~\ref{fig:overview}). 
Access to partial measurements of the nodal dynamics exacerbate this challenge. 

In this paper, we leverage contraction theory to compare the dynamics of networked nonlinear systems\footnote{For results on network structure identification in the linear setting, see companion paper \cite{GILL_2025}.} with different network structures.
Contraction theory provides a framework to study the dynamics of nonlinear systems with varying initial conditions, and has been used to discuss the behavior of interconnected systems and networks of oscillators \cite{WANG_2004, LOHMILLER_1998, FB-CTDS}. 
We are specifically interested in the \textit{limits} of topology identification in the nonlinear setting, that is: 
\textbf{which network structures are impossible to distinguish based on partial measurement?}
In the context of contraction, this translates to: which network structures exhibit \textit{contraction in the observable space}?
This diverges from standard notions of contraction, which assume full observation and focus on one system with varying initial conditions.
In contrast, our work focuses on contraction in the observable space, i.e., different systems exhibiting similar observed dynamics (we provide a rigorous definition in Section \ref{sec:Contraction_theory}).

\begin{figure}[t!]
      \centering
      \includegraphics[clip, trim=4.8cm 3.8cm 6.5cm 3cm,width = \linewidth]{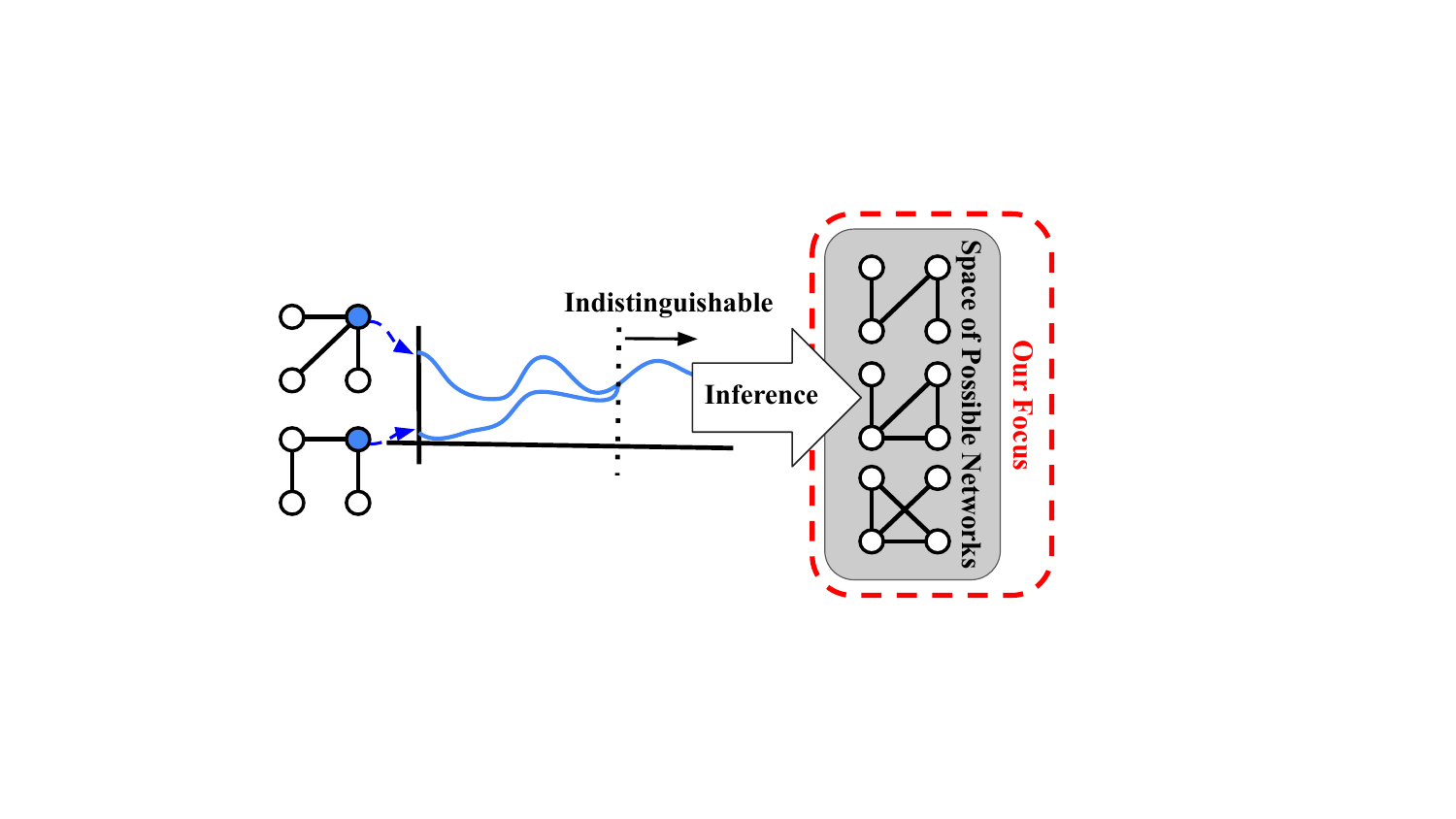}
      \caption{Various network structures can have observed trajectories that contract to one another making them indistinguishable from each other. We derive sufficient conditions on the dynamics for this phenomenon to occur, and in doing so, we are able to generate other candidate structures that behave similarly (i.e., have the same observed trajectories).}
      \label{fig:overview}
   \end{figure}
   
We begin by developing a framework to analyze the space of networks that are indistinguishable based on partial measurements of the nodes' states, and establish sufficient conditions for indistinguishability (Section \ref{sec:Contraction_theory}). We then show how this framework can be applied to nonlinear (Section \ref{sec:nonlin_nets}) and linear (Section \ref{sec:linear_nets}) networked systems. We apply this theory to Kuramoto oscillator networks, and show how one can generate a variety of candidate Kuramoto oscillator networks that are indistinguishable based on partial measurements (Section \ref{sec:Kuramoto}, \ref{sec:sims}). Concluding remarks are provided in Section \ref{sec:conclusion}.

\noindent\textbf{Notation.~} We denote the vector of all 1's with $\mathds{1}$. The norm with respect to a matrix $M$ is $\| x\|_M = \sqrt{x^\top M x}$, and the $\ell_\infty$ norm is $\| x\|_\infty = \max_i |x_i|$. The partial derivative of a function $f$ with respect to a variable $s$ is denoted by $\partial_sf$ and the time derivative by $\frac{df}{dt}$ or equivalently $\dot{f}$. The pseudoinverse of a matrix $A\in \R^{m\times n}$ with $\text{rank}(A) = m$ is $A^\dagger = A^\top (AA^\top)^{-1}$. The spectral abscissa of a matrix is given as $\alpha(A) = \max_i\{ \text{Re}(\lambda_i(A))\}$. Lastly, $\range(\cdot)$ and $\nullsp(\cdot)$ correspond to the range and nullspace of their respective arguments, and $\oplus$ is the  direct sum of two vector spaces. 

\section{Contraction Theory for System Comparisons}\label{sec:Contraction_theory}

We consider two systems $\Sigma$ and $\tilde\Sigma$ evolving under the following dynamics 
\begin{equation}\label{sys:original}
    \Sigma : \begin{cases}
        \dx = f(x),  & x(0) = x_0\\
        y = Cx,
    \end{cases}
\end{equation}

\begin{equation} \label{sys:perturbed}
        \tilde{\Sigma} : \begin{cases}
        \dtx = g(\tx), & \tx(0) = \tx_0\\
        \ty = C\tx.
    \end{cases}
\end{equation}
Here $x, \tx \in \R^n$ encode the states of each of the systems and, $f: \R^n \rightarrow \R^n$ and  $g: \R^n \rightarrow \R^n$ are nonlinear functions encoding the dynamics. Lastly, $y, \ty \in \R^p$ are measurements of the systems through the observation matrix $C\in \R^{p\times n}$. 
The choice to analyze linear measurements of the states is reasonable as, for instance, in neuroscience settings microelectrode arrays are used to capture single neuron or multiple neuron recordings in a linear superposition. We now formalize the notion of indistinguishability. 
\begin{definition}
    We say that the systems \eqref{sys:original} and \eqref{sys:perturbed} \textit{become indistinguishable} with respect to some positive definite matrix $M$ if $\forall \epsilon >0 ~~\exists T > 0$ such that $\|y(t) - \tilde y(t)  \|_M <\epsilon ~~ \forall t > T$.  
\end{definition}
To compare the trajectories of the systems \eqref{sys:original} and \eqref{sys:perturbed} and determine whether they become indistinguishable, we consider the auxiliary system inspired by \cite{JOUFFROY_2005}:
\begin{equation}\label{sys:aux}
    \Sigma_{\text{aux}} : \begin{cases}
        \dxa = f(\xa) + s\Delta(\xa), & \xa(0) = (1-s)x_0 + s\tx_0 \\
        \ya = C\xa, \\
        \Delta(\xa) = g(\xa) - f(\xa).
    \end{cases}
\end{equation}
Here $s\in [0,1]$, and for $s = 0$ we recover the dynamics of $\Sigma$ and for $s=1$ we recover the dynamics of $\tilde{\Sigma}$.
\begin{remark}
    The auxiliary system defined in \eqref{sys:aux} is a convex combination of the dynamics in \eqref{sys:original} and \eqref{sys:perturbed}. In \cite[Ex. 2.2]{WANG_2004} a convex combination of contracting systems is shown to tend to a shared trajectory or equilibrium point.
\end{remark}
We now measure the time-varying length \cite{JOUFFROY_2005, SIMPSONPORCO_2014} between the trajectories of \eqref{sys:original} and \eqref{sys:perturbed} with respect to some positive definite matrix $M(\xa) \in \R^{p\times p}$  defined as
\begin{equation}\label{eq:length}
    L(t)  = \int_{0}^{1} \| \partial_s \ya (t,s)\|_M ds.
\end{equation}
We now study the dynamics of this length. For brevity, we define the matrix $M_C = C^\top MC$.

\begin{lemma}\label{lem:length_diff_ineq}
    Assume the auxiliary system \eqref{sys:aux} is contracting with respect to $M_C$ with rate $\mu\in \R_{>0}$, i.e., it satisfies the linear matrix inequality (LMI)
    \begin{multline}\label{eq:LMI_contracting}
    \big(\tfrac{\partial f}{\partial \xa}+ s\tfrac{\partial \Delta}{\partial \xa}\big)^\top M_C +\dot M_C+ M_C \big(\tfrac{\partial f}{\partial \xa}+ s\tfrac{\partial \Delta}{\partial \xa}\big)  \preceq -2\mu M_C,
\end{multline}
    then the time-varying length $L(t)$ satisfies the differential inequality
    \begin{equation}\label{eq:length_diff}
        \dot{L}(t) \leq -\mu L(t) + \int_0^1 \| C\Delta(\xa(t,s))\|_Mds.
    \end{equation}
\end{lemma}
\begin{proof}
    Identify that $\ya(t,s)$ can be implicitly expressed as
\begin{equation}
    \ya(t,s) = C\Big( \xa(0,s) + \int_{\tau=0}^t f(\xa(\tau, s)) + s\Delta(\xa(\tau,s)) d\tau\Big).
\end{equation}
Differentiating with respect to $s$ yields
\begin{multline}
        \partial_s \ya(t,s) = \\ C\Big( \partial_s\xa(0,s) + \int_{\tau=0}^t \partial_s\{f(\xa(\tau, s)) + s\Delta(\xa(\tau,s))\} d\tau\Big).
\end{multline}
Finally, differentiating with respect to time leads to 
\begin{equation}
        \tfrac{d}{dt}\partial_s \ya(t,s) =  C \tfrac{\partial f}{\partial \xa}\partial_s\xa + C\Delta + s C\tfrac{\partial \Delta}{\partial \xa}\partial_s\xa.
\end{equation}
This implies that
\begin{multline}
        \tfrac{d}{dt} \| \partial_s \ya \|_M^2  = \partial_s\xa^\top \bigg\{\tfrac{\partial f}{\partial \xa}^\top M_C +\dot M_C+ M_C \tfrac{\partial f}{\partial \xa}  \\+ s\Big(\tfrac{\partial \Delta}{\partial \xa}^\top M_C + M_C \tfrac{\partial \Delta}{\partial \xa}\Big)\bigg\}\partial_s\xa + 2
        \partial_s\xa^\top M_C\Delta .
\end{multline}
By employing the assumption \eqref{eq:LMI_contracting} and applying the Cauchy-Schwarz inequality we arrive at
\begin{equation}\label{eq:ys_ineq}
    \tfrac{d}{dt} \| \partial_s \ya \|_M^2 \leq -2\mu\| \partial_s \ya \|_M^2 + 2\| \partial_s \ya  \|_M\|C\Delta\|_M.
\end{equation}
Differentiating \eqref{eq:length} and substituting \eqref{eq:ys_ineq} will lead to the result (see \cite[Thm. 2.3]{SIMPSONPORCO_2014} for details).
\end{proof}

\begin{corollary}
    If we find a semidefinite matrix $M \succeq 0$ satisfying \eqref{eq:LMI_contracting} we can guarantee contraction in all space except for $\nullsp(M)$. See \cite{FB-CTDS} for discussion on the notion of semicontraction. 
\end{corollary}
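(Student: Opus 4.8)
The plan is to observe that the proof of Lemma~\ref{lem:length_diff_ineq} never actually invoked strict positive-definiteness of $M$: every step survives when $M\succeq 0$. First I would check that the purely algebraic identity for $\frac{d}{dt}\|\partial_s\ya\|_M^2$ carries over unchanged, since it is a direct computation with no sign requirement on $M$. The only inequality used in the proof is Cauchy--Schwarz applied to the cross term $2\,\partial_s\xa^\top M_C\Delta = 2\,(C\partial_s\xa)^\top M (C\Delta)$. Because $M\succeq 0$ still defines a genuine semi-inner product $\langle u,v\rangle_M = u^\top M v$, the bound $|u^\top M v|\le \|u\|_M\|v\|_M$ continues to hold (it needs only semidefiniteness, not definiteness). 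Hence the contraction estimate \eqref{eq:ys_ineq} and the differential inequality \eqref{eq:length_diff} follow verbatim, with $L(t)$ now interpreted as a \emph{seminorm} length satisfying $\dot L \le -\mu L + \int_0^1\|C\Delta\|_M\,ds$.

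The second step is to interpret this seminorm contraction geometrically. Since $M$ is symmetric and positive semidefinite, I would use the orthogonal decomposition $\R^p = \nullsp(M)\oplus\nullsp(M)^\perp$, where $\nullsp(M)^\perp=\range(M)$. On this complement the restriction of $M$ is strictly positive definite, so $\|\cdot\|_M$ is a bona fide norm there, while it annihilates any component lying in $\nullsp(M)$. Noting that $\|\partial_s\ya\|_M^2 = \partial_s\xa^\top M_C\,\partial_s\xa$, the length $L(t)$ measures only the portion of the infinitesimal displacement $\partial_s\ya$ that lives in $\nullsp(M)^\perp$. The exponential decay furnished by the homogeneous term $-\mu L$ then forces contraction of the observed trajectories in every direction orthogonal to $\nullsp(M)$, which is exactly the claimed semicontraction; no statement can be made about the $\nullsp(M)$ component, consistent with the exclusion in the corollary.

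I expect the main obstacle to be the careful bookkeeping of the degenerate form rather than any deep estimate. Two points need attention. First, the Cauchy--Schwarz step must be justified in the edge case $\|\partial_s\ya\|_M=0$: here $M\succeq 0$ forces $M\partial_s\ya=0$, so the cross term $2\,\partial_s\ya^\top M(C\Delta)$ vanishes and the bound holds trivially. Second, if the metric $M=M(x)$ is permitted to be state-dependent, one must ensure that $\nullsp(M(x))$ is constant along trajectories; otherwise the subspace in which contraction is asserted would itself move and the decomposition $\R^p=\nullsp(M)\oplus\nullsp(M)^\perp$ would not be globally well defined. Assuming a fixed nullspace (or simply a constant $M$, as is typical), the argument closes and delivers semicontraction modulo $\nullsp(M)$, matching the notion discussed in \cite{FB-CTDS}.
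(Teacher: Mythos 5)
Your proposal is correct and coincides with the paper's (implicit) justification: the paper states this corollary without any proof, as an immediate consequence of Lemma~\ref{lem:length_diff_ineq}, and your observation that every step of that lemma's proof --- the algebraic identity for $\frac{d}{dt}\|\partial_s\ya\|_M^2$, the substitution of the LMI \eqref{eq:LMI_contracting}, and Cauchy--Schwarz for the semi-inner product $u^\top M v$ --- survives verbatim under $M \succeq 0$, yielding decay of the seminorm length and hence contraction modulo $\nullsp(M)$, is exactly the intended argument. Your two caveats (handling the degenerate case $\|\partial_s\ya\|_M = 0$, and requiring $\nullsp(M(x))$ to be constant so that the decomposition $\R^p = \nullsp(M)\oplus\range(M)$ is globally well defined) are apt refinements of points the paper leaves unstated.
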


Lemma \ref{lem:length_diff_ineq} establishes that $L(t)$ decreases and remains bounded provided $\Delta(\xa)$ is bounded $\forall t, s$. This result is analogous to the discussion on disturbed flows in \cite{LOHMILLER_1998}. 

\begin{remark}
    Lemma \ref{lem:length_diff_ineq} is an extension of traditional contraction theory results. This framework shows the impact of partial observation via $M_C$, and provides an explicit approach on how to compare the dynamics across systems.  
\end{remark}

\begin{corollary}\label{cor:Cdel_0}
    If the difference between the dynamics $\Delta(\xa)$ satisfies $C\Delta(\xa) = 0$ the two systems' measurements contract to each other with rate $\mu$. 
\end{corollary}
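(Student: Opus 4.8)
The plan is to substitute the hypothesis $C\Delta(\xa)=0$ directly into the differential inequality furnished by Lemma~\ref{lem:length_diff_ineq}, and then translate the resulting exponential decay of the length $L(t)$ into decay of the measurement gap $\|y(t)-\tilde y(t)\|_M$. First I would recall that, under the LMI~\eqref{eq:LMI_contracting}, Lemma~\ref{lem:length_diff_ineq} gives $\dot L(t)\le -\mu L(t)+\int_0^1\|C\Delta(\xa(t,s))\|_M\,ds$. Since the hypothesis forces $C\Delta(\xa(t,s))=0$ for every $s\in[0,1]$ and every $t$, the integrand vanishes identically and the inequality collapses to the homogeneous form $\dot L(t)\le -\mu L(t)$.

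Next I would apply the comparison lemma (Gr\"onwall's inequality) to this scalar bound to obtain $L(t)\le L(0)\,e^{-\mu t}$, so that the length of the connecting arc decays exponentially at rate $\mu$. This step requires only that $L(0)$ be finite, which holds since $\partial_s\ya(0,s)=C\,\partial_s\xa(0,s)=C(\tx_0-x_0)$ is constant in $s$ and therefore integrable on $[0,1]$.

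It then remains to bound the true output discrepancy by the arc length. Here I would use that the curve $s\mapsto\ya(t,s)$ interpolates between the two outputs: at $s=0$ the auxiliary dynamics reduce to those of $\Sigma$ with $\xa(0,0)=x_0$, giving $\ya(t,0)=y(t)$, while at $s=1$ they reduce to those of $\tilde\Sigma$ with $\xa(0,1)=\tx_0$, giving $\ya(t,1)=\tilde y(t)$. The fundamental theorem of calculus and the triangle inequality for the $M$-norm then yield
\begin{equation*}
\|y(t)-\tilde y(t)\|_M=\Big\|\int_0^1\partial_s\ya(t,s)\,ds\Big\|_M\le\int_0^1\|\partial_s\ya(t,s)\|_M\,ds=L(t)\le L(0)\,e^{-\mu t},
\end{equation*}
so the outputs contract to one another at rate $\mu$; in particular, for any $\epsilon>0$ a large enough $T$ makes $\|y(t)-\tilde y(t)\|_M<\epsilon$ for all $t>T$, matching the definition of becoming indistinguishable.

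The step I expect to be the main obstacle is the last one, specifically justifying the length-bounds-distance inequality when $M=M(\xa)$ is permitted to be state-dependent: in that case the norm measuring the integrand varies along the curve, so the clean triangle-inequality argument above is exact only for constant $M$ and must otherwise be recast as a Riemannian-distance estimate, with a uniform positive lower bound on $M$ needed to convert $L(t)\to 0$ into convergence of $\|y(t)-\tilde y(t)\|$ in a fixed norm. For constant $M$ this subtlety vanishes and the three steps chain together immediately.
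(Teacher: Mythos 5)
Your proposal is correct and follows the paper's own proof exactly: substitute $C\Delta(\xa)=0$ into the differential inequality of Lemma~\ref{lem:length_diff_ineq} so that it collapses to $\dot L(t)\le-\mu L(t)$, then apply Gr\"onwall's inequality to get $L(t)\le e^{-\mu t}L(0)$. You in fact go one step beyond the paper, which stops at ``the length contracts'': your final step bounding $\|y(t)-\tilde y(t)\|_M\le L(t)$ via the fundamental theorem of calculus (together with the correct caveat that for state-dependent $M(\xa)$ this must be recast as a Riemannian-distance estimate with a uniform lower bound on $M$) makes explicit the link between length decay and the paper's definition of indistinguishability, a link the paper's proof leaves implicit.
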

\begin{proof}
    Assume that $C\Delta(\xa) = 0$ then \eqref{eq:length_diff} becomes
    \begin{equation}
    \dot L(t) \leq -\mu L(t).
    \end{equation}
    Applying Gr\"{o}nwall's inequality yields
    \begin{equation}
        L(t) \leq e^{-\mu t}L(0).
    \end{equation}
    Hence the length contracts with rate $\mu$ as required.
\end{proof}

Corollary \ref{cor:Cdel_0} establishes that if the difference between the dynamics $\Delta(\xa)$ is in the nullspace of the observation matrix $C$ then as $t \rightarrow \infty$ the observations $y(t)$ and $\ty(t)$ will become indistinguishable. This motivates determining the conditions that need to be enforced on the Jacobians of $f$ and $\Delta$ in order for the LMI in \eqref{eq:LMI_contracting} to be satisfied; which would imply that the systems in \eqref{sys:original} and \eqref{sys:perturbed} would become indistinguishable. To do so, we consider constant matrices $M(x) = M$ and establish the following definition. 

\begin{definition}
    We call a matrix $A$ \textit{contractive in the observable space} defined by $C$ if
    \begin{equation}\label{eq:contraction_obsv_sp}
        \alpha(CAC^\dagger) < 0.
    \end{equation}
\end{definition}

We now show that contraction in the observable space is directly related to the indistinguishability of network structures by showing that it is a requirement for \eqref{eq:LMI_contracting} to be satisfied. In the context of the auxiliary system, we  require 
\begin{equation}\label{eq:condition_jacobian_general}
    \alpha\Bigl(C \Bigl(\frac{\partial f}{\partial \xa} + s \frac{\partial \Delta}{\partial \xa}\Bigr)C^\dagger\Bigr) < 0.
\end{equation}

\begin{theorem}\label{thm:LMI_existence}
    Let $A$ be an arbitrary square matrix and suppose $C$ has full row rank, and further $\nullsp (C) \subseteq \nullsp (CA)$. Then, 
    $\alpha(CAC^\dagger) \leq -\mu$ if and only if $\exists M \succ 0$ such that
    \begin{equation}\label{eq:LMI_A}
        A^\top M_C + M_C A \preceq -2\mu M_C.
    \end{equation}
\end{theorem}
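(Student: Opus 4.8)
The plan is to reduce the LMI \eqref{eq:LMI_A}, which lives in $\R^{n\times n}$, to an equivalent Lyapunov inequality for the $p\times p$ matrix $\hat A \coloneq CAC^\dagger$, and then invoke the classical Lyapunov characterization of the spectral abscissa. The linchpin is the hypothesis $\nullsp(C)\subseteq\nullsp(CA)$, which I would first convert into an intertwining relation. Since $C$ has full row rank, $C^\dagger C$ is the orthogonal projector onto $\range(C^\top)$ and $I_n - C^\dagger C$ projects onto $\nullsp(C)$; the nullspace inclusion therefore gives $CA(I_n - C^\dagger C)=0$, i.e.\ $CA = CAC^\dagger C = \hat A C$. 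This is the one place the structural hypothesis is used, and it is what lets the observable-space matrix $\hat A$ act as a genuine reduced dynamics.

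With $CA=\hat A C$ in hand, I would substitute $M_C = C^\top M C$ into \eqref{eq:LMI_A}. Each term factors through $C$: $A^\top M_C = (CA)^\top M C = C^\top \hat A^\top M C$ and $M_C A = C^\top M \hat A C$, so the left side equals $C^\top(\hat A^\top M + M\hat A)C$ while the right side is $C^\top(-2\mu M)C$. Hence \eqref{eq:LMI_A} is equivalent to $C^\top(\hat A^\top M + M\hat A + 2\mu M)C \preceq 0$. Because $C$ has full row rank the map $x\mapsto Cx$ is onto $\R^p$, so for any symmetric $S$ one has $C^\top S C \preceq 0$ iff $S\preceq 0$; applying this cancels $C$ and leaves the reduced inequality $\hat A^\top M + M\hat A \preceq -2\mu M$ with the same $M\succ 0$.

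It then remains to prove the equivalence ``$\alpha(\hat A)\le -\mu \iff \exists\, M\succ 0:\ \hat A^\top M + M\hat A \preceq -2\mu M$''. Writing $B \coloneq \hat A + \mu I$, this says $\alpha(B)\le 0$ iff $B^\top M + MB\preceq 0$ for some $M\succ 0$. For the direction LMI $\Rightarrow$ spectral abscissa I would take any eigenpair $Bv=\lambda v$ and evaluate the quadratic form, which collapses to $v^*(B^\top M + MB)v = 2\,\mathrm{Re}(\lambda)\,v^* M v$; since $v^* M v>0$ the LMI forces $\mathrm{Re}(\lambda)\le 0$, giving $\alpha(B)\le 0$ and hence $\alpha(\hat A)\le -\mu$. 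For the converse, when $\alpha(B)<0$ (so $B$ is Hurwitz) I would construct $M=\int_0^\infty e^{B^\top t}Q\, e^{Bt}\,dt$ for any $Q\succ 0$, which is positive definite and satisfies $B^\top M + MB = -Q\prec 0\preceq 0$.

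The main obstacle is the converse at the boundary $\alpha(B)=0$: the Lyapunov integral no longer converges, and the non-strict equivalence can genuinely fail when $B$ has a defective eigenvalue on the imaginary axis (a nilpotent Jordan block admits no $M\succ 0$ with $B^\top M + MB\preceq 0$). A clean treatment therefore either assumes the marginal eigenvalues of $\hat A+\mu I$ are semisimple, or first establishes the strict version $\alpha(\hat A)<-\mu\iff \hat A^\top M + M\hat A\prec -2\mu M$ (where the integral construction applies verbatim) and reads the non-strict statement as its natural closure. Everything upstream — the intertwining identity $CA=\hat A C$ and the full-row-rank cancellation — is exact and carries no such caveat, so the reduction is the substantive structural content and the Lyapunov step is where the remaining care is needed.
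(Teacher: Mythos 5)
Your reduction is correct and takes a genuinely different route from the paper's proof. The paper never forms the intertwining identity $CA = \hat{A}C$. For the forward direction it starts from the reduced inequality $(CAC^\dagger)^\top M + M(CAC^\dagger) + 2\mu M \preceq 0$, pre- and post-multiplies by $CC^\top$ to obtain $C(A^\top M_C + M_C A + 2\mu M_C)C^\top \preceq 0$, and then establishes the full $n\times n$ LMI by decomposing an arbitrary $v \in \R^n$ as $v = C^\top m + q$ with $q \in \nullsp(C)$: the $q$-$q$ term vanishes since $Cq = 0$, and the cross term collapses to $2m^\top CC^\top M(CAq)$, which is killed precisely by the hypothesis $\nullsp(C) \subseteq \nullsp(CA)$. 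For the converse it multiplies \eqref{eq:LMI_A} on both sides by $C^\top C$, strips the outer factors by an analogous decomposition of $\R^p$, and conjugates by $(CC^\top)^{-1}$ to recover the reduced LMI; notably that direction never uses the nullspace hypothesis (the paper remarks on this afterwards), whereas your factorization $A^\top M_C = C^\top \hat{A}^\top M C$ invokes it in both directions --- harmless under the theorem's hypotheses, but slightly less sharp for the converse. What your route buys is structure: the theorem becomes the statement that \eqref{eq:LMI_A} is exactly the $p\times p$ Lyapunov inequality for $\hat{A} = CAC^\dagger$ conjugated by $C$, with all the linear algebra concentrated in the single cancellation fact that $C^\top S C \preceq 0 \iff S \preceq 0$ when $C$ has full row rank.

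You are also right to flag the non-strict Lyapunov step, and you should know that the paper shares this gap rather than resolving it: its forward direction opens with ``Suppose $\alpha(CAC^\dagger) \leq -\mu$, then we know that $\exists M \succ 0$'' satisfying the non-strict reduced LMI, which is false for an arbitrary matrix when an eigenvalue with real part exactly $-\mu$ is defective. Concretely, $C = I$ and $A = \bigl[\begin{smallmatrix} -\mu & 1 \\ 0 & -\mu \end{smallmatrix}\bigr]$ satisfy every hypothesis of the theorem with $\alpha(CAC^\dagger) = -\mu$, yet $A^\top M + MA + 2\mu M = (A+\mu I)^\top M + M(A+\mu I) \preceq 0$ has no positive definite solution, by exactly your nilpotent-Jordan-block computation. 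So the ``only if'' direction of the theorem as stated fails on the boundary; your proposed repairs (assume the eigenvalues of $CAC^\dagger$ with real part $-\mu$ are semisimple, or restate with strict inequalities) are not mere caveats but necessary corrections, and with either one in place your argument is complete.
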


\begin{proof}
    We begin with the forward implication. 
    Suppose that $\alpha(CAC^\dagger) \leq -\mu$, then $\exists M \succ 0$ such that the following LMI holds
    \begin{equation}
        (CAC^\dagger)^\top M + M (CAC^\dagger) + 2\mu M \preceq 0.
    \end{equation}
    Pre- and post-multiplying by $CC^\top$ we arrive at
    \begin{equation}\label{eq:C_semidef}
        C(A^\top M_C + M_CA  +2\mu M_C)C^\top \preceq 0.
    \end{equation}
    Identifying that $\R^n = \range(C^\top) \oplus \nullsp(C)$ we can decompose any vector $v$ as $v= C^\top m + q$ where $q \in \nullsp(C)$. Hence,
    \begin{equation}\begin{aligned}
                & v^\top ( A^\top M_C + M_CA  +2\mu M_C) v \\
                &= ( C^\top m + q)^\top ( A^\top M_C + M_CA  +2\mu M_C) ( C^\top m + q)\\
                &= \underbrace{m^\top C(A^\top M_C + M_CA  +2\mu M_C)C^\top m}_{\leq 0 \text{ by } \eqref{eq:C_semidef}} \\
                &+ \underbrace{q^\top (A^\top M_C + M_CA  +2\mu M_C)q}_{= 0 \text{ since } q \in \nullsp (C)} \\
                &+ 2 m^\top C( A^\top M_C + M_CA  +2\mu M_C)q \\ 
                &\leq 2 m^\top C( A^\top M_C + M_CA  +2\mu M_C)q\\
                &= 2m^\top CM_CAq = 2m^\top CC^\top M CA q = 0. 
    \end{aligned}
    \end{equation}
    By assumption since $CAq = 0$, we arrive at the result 
    \begin{equation}
        A^\top M_C + M_C A \preceq -2\mu M_C.
    \end{equation}

We now prove the reverse implication by assuming that \eqref{eq:LMI_A} holds. 
Pre- and post-multiplying \eqref{eq:LMI_A} by $C^\top C \succeq 0$ we arrive at
    \begin{equation}
        C^\top(CA^\top M_CC^\top + CM_CAC^\top + 2\mu CM_CC^\top)C \preceq 0.
    \end{equation}
    Since $\R^p = \range(C) \oplus \nullsp(C^\top)$ we can express any vector $v$ as $v = Cm + q$ where $q \in \nullsp(C^\top)$. Applying a similar argument as made previously
    \begin{equation}
        v^\top (CA^\top M_CC^\top + CM_CAC^\top + 2\mu CM_CC^\top)v \leq 0 ~~ \forall v,
    \end{equation}
    which implies
    \begin{equation}
        CA^\top M_CC^\top + CM_CAC^\top + 2\mu CM_CC^\top \preceq 0.
    \end{equation}
    Now, pre- and post-multiplying by $(CC^\top)^{-1}$ establishes
    \begin{equation}
        (CAC^\dagger)^\top M + M (CAC^\dagger) + 2\mu M \preceq 0.
    \end{equation}
    Hence we conclude $\alpha (CAC^\dagger) \leq - \mu$ as required.
\end{proof}

\begin{remark}
    The requirement $\alpha (CAC^\dagger) \leq - \mu$ is essentially a semicontractivity requirement on $A$ \cite{FB-CTDS}. 
\end{remark}

The assumption $\nullsp (C) \subseteq \nullsp (CA)$ is not required for the reverse implication in Theorem \ref{thm:LMI_existence}. Moreover, this assumption amounts to requiring $\nullsp (C)$ to be $A$-invariant. In practice this is not regularly satisfied, but in certain cases (see Section \ref{sec:Kuramoto}) can be. When $C = I$ it is trivially satisfied,  \eqref{eq:LMI_A} reduces to a Lyapunov equation and \eqref{eq:contraction_obsv_sp} implies $A$ must be Hurwitz.

Theorem \ref{thm:LMI_existence} implies that satisfying $\alpha(CAC^\dagger) < 0$ is essential in order for the matrix $M$ to exist. This underscores that the auxiliary system in \eqref{sys:aux} must be contracting in the observable space for the two systems being compared to have trajectories that contract to each other and become indistinguishable. That is, it is necessary to identify whether \eqref{eq:condition_jacobian_general} is satisfied which may be challenging if the Jacobians depend explicitly on $\xa$.

As a summary, the following steps will determine whether two systems are indistinguishable: 
\begin{enumerate}[(i)]
    \item Verify \eqref{eq:condition_jacobian_general} and find a suitable $M \succeq 0$ that satisfies \eqref{eq:LMI_contracting} by applying Theorem \ref{thm:LMI_existence}. Semidefinite matrices allow for further generality.
    \item To satisfy Corollary \ref{cor:Cdel_0}, determine whether $C\Delta(\xa) = 0$.
\end{enumerate}

\subsection{Application I: Nonlinear Networked Systems}\label{sec:nonlin_nets}

The results described previously provide a general framework for comparing the observed dynamics of nonlinear systems. We now particularize these results to nonlinear networked systems to describe how a variety of network structures can yield dynamics that are similar. Specifically, consider the dynamics of a network given as 
\begin{equation}\label{sys:nonlinear_nets}
    \Sigma_{\text{net}}:\begin{cases}
        \dx = h(x)  + W \Phi(x),& x(0) = x_0\\
        y = Cx.
    \end{cases}
\end{equation}
Here $h(x)$ is a vector encoding the self dynamics of individual nodes in the network, i.e., it does not encode any interaction between nodes. Interactions between nodes are captured in the coupling term $W\Phi(x)$, where $W$ encodes the structure of the network, and $\Phi(x)$ is a nonlinear function applied element-wise to the nodal states. Applying our framework, we can determine whether a perturbed network $W+ \Delta$ will yield identical or similar dynamics. The corresponding perturbed system dynamics would be
\begin{equation}
    \tilde\Sigma_{\text{net}}:\begin{cases}
        \dx = h(x)  + (W+\Delta) \Phi(x),& x(0) = x_0\\
        y = Cx.
    \end{cases}
\end{equation}
The condition outlined in \eqref{eq:condition_jacobian_general} amounts to verifying whether
\begin{equation}
    \alpha\Bigl(C \Bigl(\frac{\partial h}{\partial \xa} + (s\Delta + W) \frac{\partial \Phi}{\partial \xa}\Bigr)C^\dagger\Bigr) < 0.
\end{equation}

\subsection{Application II: Linear Systems}\label{sec:linear_nets}
To build some intuition, we now consider linear systems. A complete and thorough analysis, via observability, of the linear setting can be found in \cite{GILL_2025}. Specifically, consider a linear system given as
\begin{equation}\label{sys:linear}
    \Sigma_{\text{lin}}:\begin{cases}
        \dx = Ax, & x(0) = x_0\\
        y = Cx.
    \end{cases}
\end{equation}
Suppose we perturb the system matrix $A$ to evolve under $A+ \Delta$ with $C\Delta = 0$. Then to satisfy \eqref{eq:LMI_contracting} with a constant matrix $M(x) = M$, we need 
\begin{equation}
    A^\top M_C + M_C A \preceq -2\mu M_C.
\end{equation}
From Theorem \ref{thm:LMI_existence} we require $\alpha(CAC^\dagger) \leq - \mu $ and $\nullsp(C) \subseteq \nullsp(CA)$. Consider the following example:  
\begin{equation}
    A = \begin{bmatrix}
        -1 & 0 \\ -1 & 1 
    \end{bmatrix}, ~~ C = \begin{bmatrix}
        1 & 0 
    \end{bmatrix}. 
\end{equation}
Notice with this choice of $A$ and $C$, $\alpha(CAC^\dagger) = \alpha(-1) = -1 < 0$ and $\nullsp(CA) = \nullsp(C)$. Moreover, any $\Delta$ such that $C\Delta = 0$ is viable, this amounts to perturbations of the form
\begin{equation}
    \Delta = \begin{bmatrix}
        0 & 0  \\ \Delta_{21} & \Delta_{22}
    \end{bmatrix}.
\end{equation}
This result is fully consistent with the fact that $x_1$ is fully decoupled from $x_2$ and hence the dynamics of $x_2$ can be modified freely, as expected. 

\section{Kuramoto Oscillator Networks}\label{sec:Kuramoto}

We now focus our attention on the Kuramoto model of coupled oscillators, which is often used in modeling neural systems \cite{STROGATZ_1994}. We study the impact of this model's contracting dynamics on its identifiability. We apply our framework to prove that a variety of network structures are indistinguishable from each other. As in \cite{JADBABAIE_2004, ZHU_2020} we express the Kuramoto oscillator network dynamics as
\begin{equation}\label{eq:kuramoto_system}
\Sigma_{\text{Kuramoto}}:\begin{cases}
        \dx = \omega - B \mathcal{A}\sin(B^\top x), \\
        y = C x.
\end{cases}
\end{equation}
Let $e$ be the number of edges in the network, then $B \in \R^{n \times e}$ is the incidence matrix encoding the network structure with arbitrary assignment of sink and source nodes, and $\mathcal{A} \in \R^{e\times e}$ is a diagonal matrix of the edge weights (not to be confused with the adjacency matrix). We always define $B$ to encode the incidence matrix for a complete graph with $n$ nodes, and hence, if an edge is not present we set the corresponding edge weight encoded in $\mathcal{A}$ to be zero. Here $\omega$ encodes the non-identical natural frequencies of the oscillators. 

\begin{remark}
    In the case of fully connected networks of oscillators with identical natural frequencies, all oscillators would synchronize \cite{DORFLER_2014} to the common natural frequency $\omega$ and all connected networks would be indistinguishable as all observations would differ by at most a constant phase shift. 
\end{remark}

We now consider how to analyze how variations to the network structure alter the measured dynamics of the oscillators. We define $\delta \in \R^e$ to be a perturbation to the edge weights, then the corresponding auxiliary system is
\begin{equation}\label{eq:kuramoto_auxiliary}
    \begin{cases}
        \dxa = \omega - B (s\text{diag}(\delta)+\mathcal{A})\sin(B^\top \xa), \\
        \ya = C \xa.
\end{cases}
\end{equation}
Further, as in \eqref{sys:aux} we define 
\begin{subequations}\begin{align}
    &f(\xa) = - B \mathcal{A}\sin(B^\top \xa), \\
    &\Delta(\xa) = - B \text{diag}(\delta)\sin(B^\top\xa).
\end{align}
\end{subequations}
The Jacobians of  $f(\xa)$  and $\Delta(\xa)$ are
\begin{subequations}\label{eq:jacobians}\begin{align}
        & \frac{\partial f}{\partial \xa} = -B\mathcal{A}\text{diag}(\cos(B^\top\xa))B^\top, \\
    & \frac{\partial \Delta}{\partial \xa} = -B \text{diag}(\delta)\text{diag}(\cos(B^\top\xa))B^\top.
\end{align}
\end{subequations}
The Jacobians \eqref{eq:jacobians} are weighted Laplacian matrices and hence $\mathds{1} \in \nullsp(\frac{\partial f}{\partial \xa})$ and  $\mathds{1} \in \nullsp(\frac{\partial \Delta}{\partial \xa})$, thus in the case of full observations $C=I$ we can at most expect semicontraction (i.e., contraction in the space orthogonal to $\text{span}(\mathds{1})$). The second smallest eigenvalue of the auxiliary systems' Jacobian will always be greater than 0 if and only if
\begin{equation}\label{eq:joint_jac}
    -B (s\text{diag}(\delta) + \mathcal{A})\text{diag}(\cos(B^\top\xa))B^\top
\end{equation}
encodes the weighted Laplacian of a connected graph \cite{JADBABAIE_2004}. Then, \eqref{eq:condition_jacobian_general} amounts to determining whether
\begin{equation}
    \alpha(-CB (s\text{diag}(\delta) + \mathcal{A})\text{diag}(\cos(B^\top\xa))B^\top C^\dagger) < 0.
\end{equation}

\begin{remark}
    The conditions for two different oscillator networks to contract to each other are analogous to requirements for synchrony found in  \cite{JADBABAIE_2004, DELABAYS_2023, DORFLER_2014}. This reiterates previous findings in \cite{CHEN_2009} that establish a link between synchrony and the inability to identify network structures.  
\end{remark}

We now apply our framework to find other indistinguishable networks when starting with a candidate network and corresponding measurement scheme. We consider an example comprised of four oscillators which is displayed in Fig.~\ref{fig:kuramoto_partial} (Net 1). We define the incidence matrix for the complete graph as
\begin{equation}\label{eq:B_matrix}
    B = \begin{bmatrix}
        1 & 1 & 1 & 0 & 0 & 0 \\
        -1 & 0 & 0 & 1 & 1 & 0 \\
        0 & -1 & 0 & -1 & 0 & 1 \\
        0 & 0 & -1 & 0 & -1 & -1
    \end{bmatrix}
\end{equation}
and the corresponding edge weight matrix as
\begin{equation}\label{eq:A_matrix}
    \mathcal{A} = \text{diag}(\begin{bmatrix}
        a_{12} & a_{13} & 0 & 0 & a_{24} & a_{34}
    \end{bmatrix}^\top).
\end{equation}
This network is connected with each oscillator connected to its nearest neighbor. We consider non-identical natural frequencies given as 
\begin{equation}
    \omega = \begin{bmatrix}
        \omega_1 & \omega_2 & \omega_3 & \omega_4
    \end{bmatrix}^\top.
\end{equation}
The corresponding differential equations describing the oscillator dynamics are:
\begin{subequations}\begin{align}
    & \dx_1 = \omega_1 - a_{12}\sin(x_1-x_2) - a_{13}\sin(x_1-x_3) \\ 
    & \dx_2 = \omega_2 - a_{12}\sin(x_2-x_1) - a_{24}\sin(x_2-x_4) \\
    & \dx_3 = \omega_3 - a_{13}\sin(x_3-x_1) - a_{34}\sin(x_3-x_4) \\ 
    & \dx_4 = \omega_4 - a_{24}\sin(x_4-x_2) - a_{34}\sin(x_4-x_3).
\end{align}
\end{subequations}
Now consider a measurement scheme of the form
\begin{equation}\label{eq:C_mat_kuramoto}
    C = \begin{bmatrix}
        1 & 1 & 0 & 0 \\
        0 & 0 & 1 & 1
    \end{bmatrix}.
\end{equation}
The measurement scheme defined in \eqref{eq:C_mat_kuramoto} indicates that we are measuring neighboring nodes in an averaging fashion. Now, consider new networks with an edge weight matrix
\begin{equation}
    \mathcal{A} + \text{diag}(\begin{bmatrix}
        \delta_1 & \delta_2 & \delta_3 & \delta_4& \delta_5&\delta_6 
    \end{bmatrix}^\top).
\end{equation}

\begin{figure}[thpb]
      \centering
      \includegraphics[clip, trim=1.8cm 4.05cm 2cm 5cm,width = \linewidth]{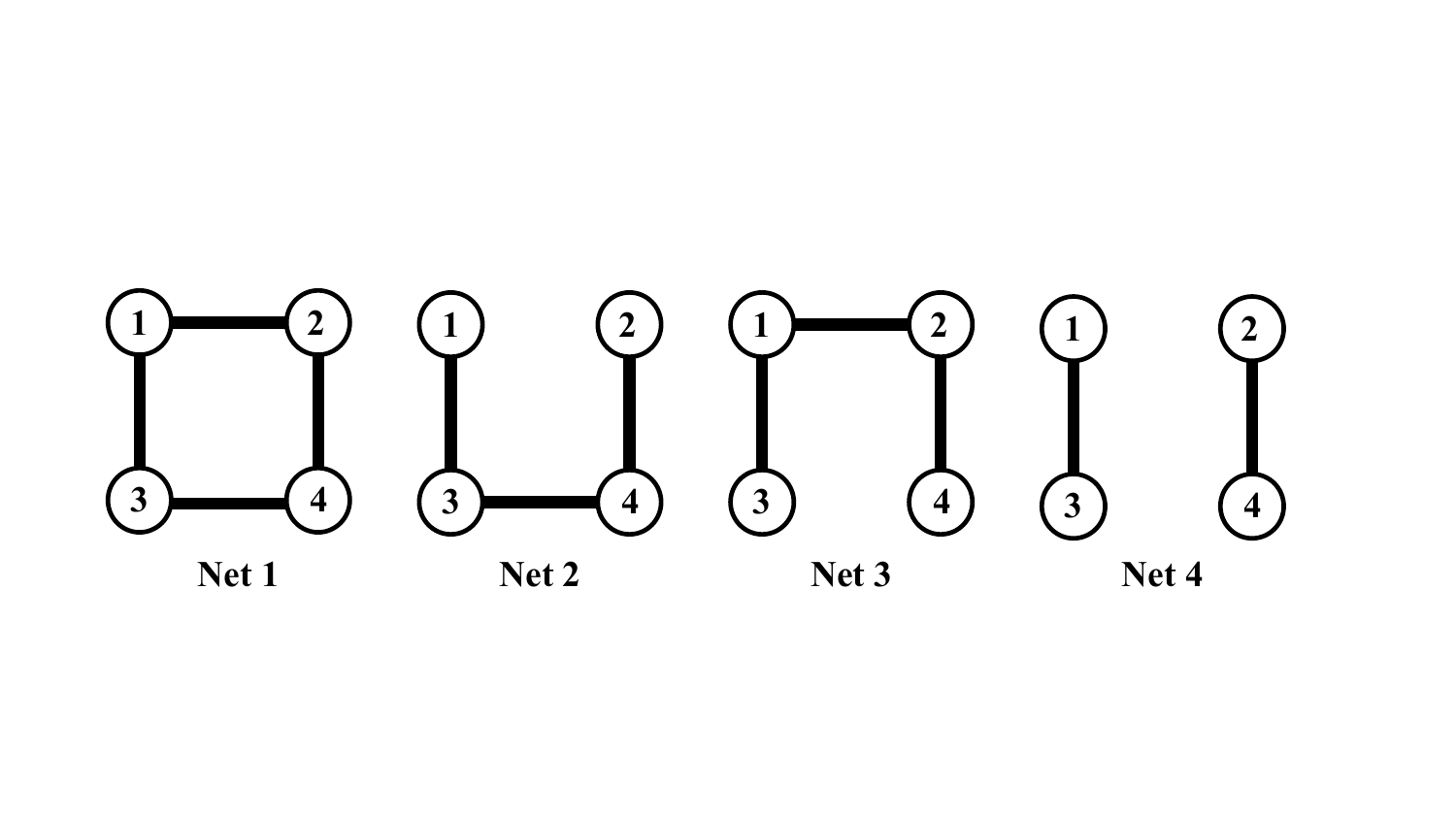}
      \caption{We display the topology encoded in \eqref{eq:B_matrix} and \eqref{eq:A_matrix} (Net 1). The three other topologies are equivalent based on the analysis when an average of nodes 1 and 2 and an average of nodes 3 and 4 are measured.}
      \label{fig:kuramoto_partial}
\end{figure}

In order to establish (semi)contractivity of each of the systems' observed trajectories to one another, we need to (i) find a matrix $M$ satisfying \eqref{eq:LMI_contracting} and (ii) show $C\Delta(\xa) = 0$ to satisfy Corollary \ref{cor:Cdel_0}. Notice that by setting $\delta_2 = \delta_3 = \delta_4 = \delta_5 = 0$, simple matrix multiplication will yield that 
\begin{equation}
    C\Delta(\xa) = -CB \text{diag}(\delta)\sin(B^\top \xa) = 0,
\end{equation}
which confirms (ii). 
We now consider satisfying (i), evaluating \eqref{eq:condition_jacobian_general}, i.e., $C\big(\frac{\partial f}{\partial \xa} + s \frac{\partial \Delta}{\partial \xa}\big) C^\dagger$ we arrive at
\begin{multline}\label{eq:partial_jacobian_example}
    \frac{1}{2}(a_{13} \cos(\xa_1 -\xa_3) + a_{24}\cos(\xa_2 -\xa_4))\begin{bmatrix}
        -1& 1\\ 1 & -1 
    \end{bmatrix}.
\end{multline}
Note that \eqref{eq:partial_jacobian_example} has one negative eigenvalue and the nullspace is simply equal to $\text{span}(\mathds{1})$ provided that
\begin{equation}\label{eq:positivity_edge}
    \frac{1}{2}(a_{13} \cos(\xa_1 -\xa_3) + a_{24}\cos(\xa_2 -\xa_4)) > 0.
\end{equation}
Hence we may conclude, that the observations will be contracting in all space except $\text{span}(\mathds{1})$. Also, if \eqref{eq:positivity_edge} is satisfied, then if we can find a matrix $M$ to satisfy \eqref{eq:kuramoto_example_lmi}, this matrix will also satisfy \eqref{eq:LMI_contracting}, as required.

\begin{equation} \label{eq:kuramoto_example_lmi}
    \begin{bmatrix}
        -1& 1\\ 1 & -1 
    \end{bmatrix}^\top M + M \begin{bmatrix}
        -1& 1\\ 1 & -1 
    \end{bmatrix} = -4M .
\end{equation}

Specifically, with $M = I - \frac{1}{n}\mathds{1} \mathds{1}^\top$ as in \cite{DELABAYS_2023}, \eqref{eq:kuramoto_example_lmi} (and therefore \eqref{eq:LMI_contracting}) is satisfied. To verify $\nullsp(C) \subseteq \nullsp(C(\frac{\partial f}{\partial \xa} + s \frac{\partial \Delta}{\partial \xa}))$, it is sufficient to satisfy
\begin{equation}\label{eq:condition_nullspace}
    a_{13} \cos(\xa_1 -\xa_3) = a_{24}\cos(\xa_2 -\xa_4).
\end{equation}
If \eqref{eq:condition_nullspace} is satisfied following Theorem \ref{thm:LMI_existence} we can use the suggested matrix $M$ with the caveat that we cannot achieve contractivity in the space defined by $\text{span}(\mathds{1})$, and thus we have satisfied requirement (i).
Furthermore, since there were no constraints enforced on the remaining perturbation weights, we may arbitrarily alter $\delta_1$ and $\delta_6$, this implies that all networks shown in Fig.~\ref{fig:kuramoto_partial} display observations that differ by only a constant phase shift. 

The conditions outlined in \eqref{eq:positivity_edge} and \eqref{eq:condition_nullspace} are challenging to verify due to the dependence on the state $\xa$ and the parametrization variable $s$. However, we now outline alternative sufficient conditions for satisfying \eqref{eq:positivity_edge} and \eqref{eq:condition_nullspace}. We introduce the following definition similar to those in \cite{DORFLER_2014, DELABAYS_2023}.

\begin{definition}
    We say that the oscillator states $\xa$ are $\gamma$-\textit{phase cohesive} if $\xa\in \Theta(\gamma) = \{ \xa : \|B^\top \xa\|_\infty < \gamma \} ~ \forall t \geq 0$. 
\end{definition}

Notably, if the oscillator states are $\pi/2$-phase cohesive, i.e., $\xa \in \Theta(\frac{\pi}{2})~ \forall t \geq 0$, then \eqref{eq:positivity_edge} is satisfied for any positive edge weights $a_{13}$ and $a_{24}$. Remaining phase cohesive has been extensively studied as it has a close relationship to achieving synchrony, we point the reader to \cite{DORFLER_2014} for a thorough discussion. Typically, the conditions to remain phase cohesive are conservative estimates on the maximum difference between the oscillator frequencies $\omega$, for instance, one such condition from \cite{DORFLER_2014} is for
\begin{equation}
    \lambda_2(B(s\text{diag}(\delta) + \mathcal{A})B^\top) > \|B^\top \omega \|_2 .
\end{equation}
This condition would not be directly applicable to the disconnected topology in Fig.~\ref{fig:kuramoto_partial} (Net 4) as $\lambda_2(B(\text{diag}(\delta) + \mathcal{A})B^\top) = 0$. However, in Section \ref{sec:sims} we observe that this network is still indistinguishable from the others. 

To satisfy \eqref{eq:condition_nullspace} it suffices to ensure that there is some symmetry, in the sense that permuting the pair of labels 1 and 3 with 2 and 4 will lead to identical difference dynamics $\xa_1 - \xa_3$ and $\xa_2-\xa_4$ for each of the oscillator pairs. This is trivially achievable if all the oscillator frequencies and edge weights are identical; if not, certain selections of similar frequencies and edge weights are sufficient. 

\section{Simulations}\label{sec:sims}

To validate our findings we simulate all four networks from Fig.~\ref{fig:kuramoto_partial} and plot the corresponding measurements $y_1(t)$ and $y_2(t)$ in Fig.~\ref{fig:kuramoto_partial_sim}.
For Fig.~\ref{fig:kuramoto_partial} (Right), we select initial conditions for nodes 1 and 2 randomly. In order to satisfy \eqref{eq:condition_nullspace}, node 4 has the same initial condition as node 1; node 3 has the same initial condition as node 2.

\begin{figure}[thpb]
      \centering
      \includegraphics[clip, trim=0cm 0cm 0cm 0cm,width = \linewidth]{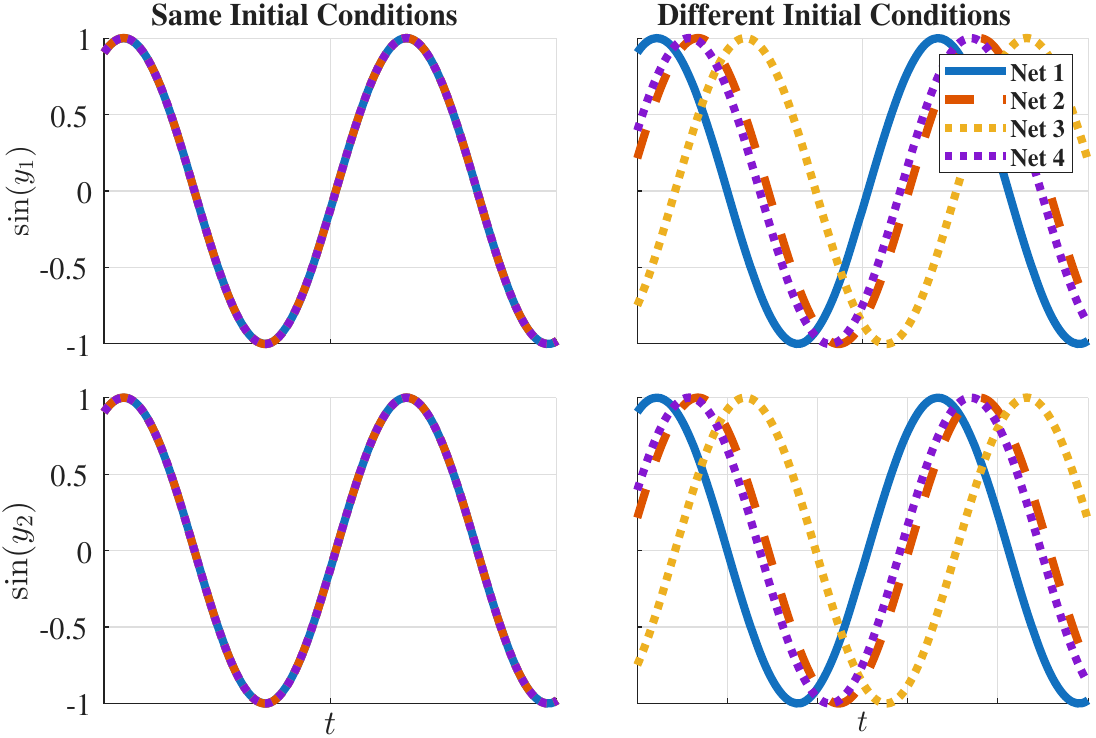}
      \caption{Comparison of different Kuramoto networks displayed in Fig.~\ref{fig:kuramoto_partial}. When beginning with the same initial condition (Left column), all four network structures' measurements are identical to each other over the entire measurement period and have no phase shifts between them. When the networks start with different initial conditions (Right column) the states differ by a constant phase shift.}
      \label{fig:kuramoto_partial_sim}
      \vspace{-1em}
\end{figure}

We see, as expected, in Fig.~\ref{fig:kuramoto_partial_sim} that when the networks are started with all the same initial conditions (Fig.~\ref{fig:kuramoto_partial_sim} (Left)) they do not display any phase shift between them, and are identical for all time, implying that the networks are all indistinguishable from one another. When the initial conditions are different (Fig.~\ref{fig:kuramoto_partial_sim} (Right)), the networks have measurements that share a common waveform and only differ by a fixed phase shift. Despite differing by a phase shift the networks are still indistinguishable as measuring the behavior past the period of transient dynamics the functional form of the measurements are identical. Remarkably, in the partial measurement setting it is possible to mischaracterize connected networks as disconnected networks (Fig.~\ref{fig:kuramoto_partial} (Net 4)) as illustrated by this example. 

Displayed in Fig.~\ref{fig:kuramoto_random} the oscillators were simulated with random natural frequencies $\omega$ and random initial conditions $x_0$ which do not necessarily adhere to the requirement \eqref{eq:condition_nullspace}. Despite this, the networks still observe similar behavior as in Fig. \ref{fig:kuramoto_partial_sim}. This emphasizes that the conditions established are sufficient but not necessary. 

\begin{figure}[thpb]
      \centering
      \includegraphics[clip, trim=0cm 0cm 0cm 0cm,width = \linewidth]{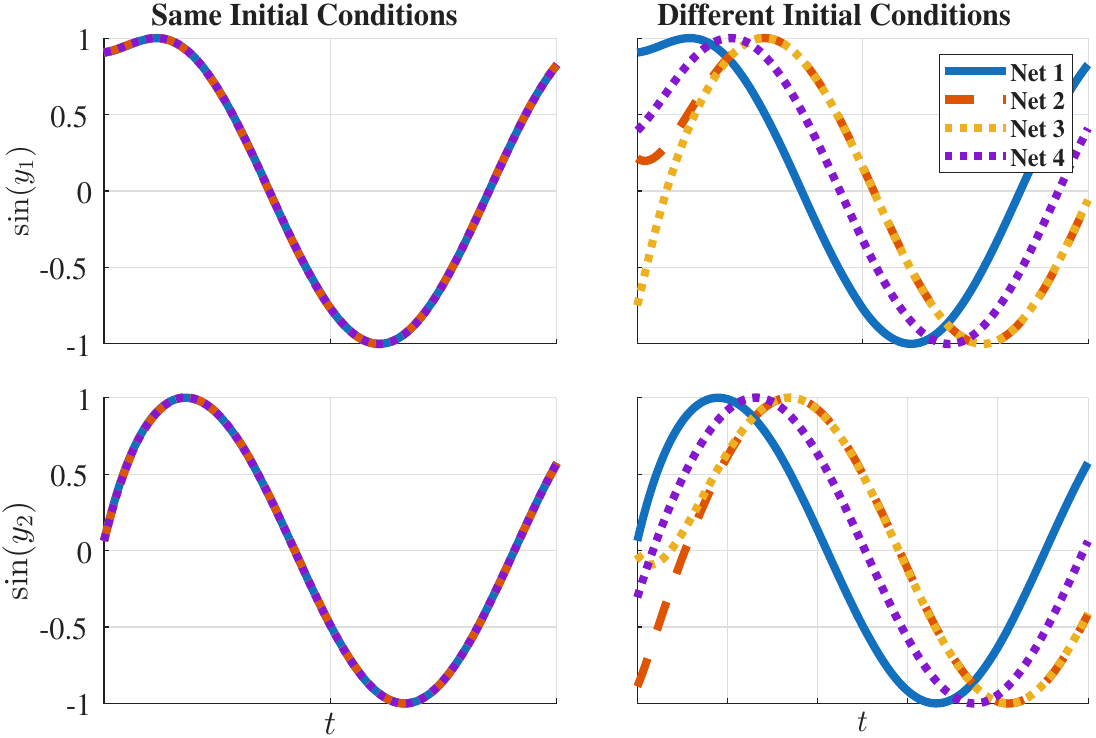}
      \caption{Comparison of different Kuramoto networks displayed in Fig.~\ref{fig:kuramoto_partial}. In this case, $x_0$ and $\omega$ are entirely random and thus violate \eqref{eq:condition_nullspace}. Networks are simulated all with the same initial condition (Left column) and with different initial conditions (Right column).}
      \label{fig:kuramoto_random}
\end{figure}

In summary, in order to consider whether other candidate network structure are indistinguishable from a hypothesis network it suffices to establish satisfaction of requirements (i) and (ii) (i.e., applying Theorem \ref{thm:LMI_existence} and Corollary \ref{cor:Cdel_0}) outlined at the end of Section \ref{sec:Contraction_theory}.



\section{Conclusions and Future Work}\label{sec:conclusion}
In this paper we have established a series of results that describe how contractivity (in a single dynamical system and in the observable space) can result in challenges in the identification of network topologies. We provide sufficient conditions for observed trajectories of different nonlinear systems to contract to one another by leveraging the contraction theory framework and establishing a notion of contractivity in the observable space. Particularizing our results to Kuramoto oscillator networks we established that synchrony, phase cohesiveness, and symmetry play a key role in determining whether different topologies are indistinguishable from one another.
We anticipate the framework proposed will find useful applications to various other models of neural dynamics in order to study how different implementations of brain networks or neural circuits exhibit the same function and can help guide new techniques for the inference of the structures in these network models.

\bibliographystyle{IEEEtran}

\bibliography{IEEEabrv,IEEEexample}

\addtolength{\textheight}{-12cm}   







\end{document}